\newtheorem{Thm}{Theorem}
\newtheorem{theorem}{Theorem}[section]
\newtheorem{Lem}[theorem]{Lemma}
\newtheorem{Prop}{Proposition}[section]
\newtheorem{rmk}{Remark}[section]
\numberwithin{equation}{section}
\newcommand{\bb}{\mathbb}
\newcommand{\mr}{\mathrm}
\newcommand{\frk}{\mathfrak}
\newcommand{\tr}{\mr{tr}\,}
\begin{document}

\title{The Poisson Realization of $\mathfrak{so}(2, 2k+2)$ on Magnetic Leaves and generalized MICZ-Kepler Problems}
\author{Guowu Meng}

\address{Department of Mathematics, Hong Kong Univ. of Sci. and
Tech., Clear Water Bay, Kowloon, Hong Kong}

\email{mameng@ust.hk}
\thanks{The author was supported by the Hong Hong Research Grants Council under RGC Project No. 603110 and the Hong Kong University of Science and Technology under DAG S09/10.SC02.}


\date{November 14, 2012}


\maketitle
\begin{abstract}
Let ${\mathbb R}^{2k+1}_*={\mathbb R}^{2k+1}\setminus\{\vec 0\}$ ($k\ge 1$) and $\pi$: ${\mathbb R}^{2k+1}_*\to \mathrm{S}^{2k}$ be the map sending $\vec r\in {\mathbb R}^{2k+1}_*$ to ${\vec r\over |\vec r|}\in \mathrm{S}^{2k}$.  Denote by $P\to {\mathbb R}^{2k+1}_*$ the pullback by $\pi$ of the canonical principal $\mathrm{SO}(2k)$-bundle $\mathrm{SO}(2k+1)\to \mathrm{S}^{2k} $. Let $E_\sharp\to {\mathbb R}^{2k+1}_*$ be the associated co-adjoint bundle and $E^\sharp\to T^*{\mathbb R}^{2k+1}_*$ be the pullback bundle under projection map $T^*{\mathbb R}^{2k+1}_*\to {\mathbb R}^{2k+1}_*$. The canonical connection on $\mathrm{SO}(2k+1)\to \mathrm{S}^{2k} $ turns $E^\sharp$ into a Poisson manifold. 

The main result here is that the real Lie algebra $\mathfrak{so}(2, 2k+2)$ can be realized as a Lie subalgebra of the Poisson algebra $(C^\infty(\mathcal O^\sharp), \{, \})$, where $\mathcal O^\sharp$ is a symplectic leave of $E^\sharp$ of special kind. Consequently, in view of the earlier result of the author, an extension of the classical MICZ Kepler problems to dimension $2k+1$ is obtained. The hamiltonian, the angular momentum, the Lenz vector and the equation of motion for this extension are all explicitly worked out.

 \end{abstract}

\tableofcontents

\section {Introduction}
The Kepler problem is the physics problem about two bodies which attract each other by a force proportional to the inverse square of
the distance. Historically this problem played a pivotal role in the development of both classical mechanics and quantum mechanics.

The Kepler problem has long been known to exist in all higher dimensions. A surprising discovery \cite{MC70} in the 1960s is that the magnetized versions of the Kepler problem, under the name of MICZ Kepler problem, also exist. For a while, these magnetized versions were thought \cite{Iwai90} to exist only in dimension $3$, $5$ and possibly $9$, corresponding to the division algebra $\mathbb C$, $\mathbb H$ and $\mathbb Q$ respectively. However, it was demonstrated in Ref. \cite{meng07} that these magnetized versions exist quantum mechanically in all higher dimensions, and that leads us to believe that these magnetized versions also exist classically in all higher dimensions.  

In Ref. \cite{meng09} the intimate relation between the conformal algebras of euclidean Jordan algebras and the generalized Kepler problems (i.e., integrable models which share the characteristic features of the Kepler problem) was discovered. In particular, this relation says that the ``nice" Poisson realization of the conformal algebra on symplectic spaces and the classical generalized Kepler problems correspond to each other. 

As an elaboration of the last sentence, let us take the Kepler problem as an example. Here, the euclidean Jordan algebra is $V=\Gamma(3):=\mathbb R\oplus \mathbb R^3$ with this Jordan multiplication: 
$$(\alpha, \mathbf u)(\beta, \mathbf v)=(\alpha\beta+\mathbf u\cdot \mathbf v, \alpha\mathbf v+\beta\mathbf u),$$
and the conformal algebra is $\frk{so}(2,4)$. Let $C_1$ be the future light cone for the Minkowski space. Using the standard euclidean structure on $V$, $T^*C_1$ can be viewed as a symplectic submanifold of $T^*V$. It is well-known that $\frk{so}(2,4)$ has a Poisson realization on the total cotangent space $T^*V$, hence a Poisson realization on $T^*C_1$. This later Poisson realization on $T^*C_1$, or equivalently on $T^*(\mathbb R^3\setminus\{0\})$, is the ``nice" Poisson realization of $\frk{so}(2,4)$ that corresponds to the classical Kepler problem. Though not emphasized in this article, the reformulation of the Kepler problem as a dynamic problem on $C_1$ is much more natural, cf. Ref. \cite{meng11}.  

\vskip 10pt
In section \ref{magnetic cone}, we introduce the notion of {\bf magnetic cone} for the Lie group $\mathrm{SO}(2k)$, a real algebraic set inside $\frk{so}^*(2k)$ which is the union of certain special co-adjoint orbits of $\mathrm{SO}(2k)$.  In section \ref{phase space}, we review the notion of Wong's phase space and its symplectic leaves (i.e., Sternberg's phase spaces) \cite{Sternberg77, Weinstein78, Montgomery84}, and then introduce the notion of {\bf magnetic leaves}, which are certain special kind of Sternberg's phase spaces. In section \ref{monopoles}, we review the notion of generalized Dirac monopoles \cite{Cotaescu05, meng04}, a concept that is crucially used in the introduction of the magnetic leaves. In section \ref{PoissonRelization}, we describe and prove our main result: the Poisson realization of the real Lie algebra $\frk{so}(2, 2k+2)$ on the magnetic leaves. In the last section, based on Ref. \cite{meng11a} and the main result here, we derive an extension of the classical MICZ Kepler problems to dimension $2k+1$. The hamiltonian, the angular momentum, the Lenz vector and the equation of motion for this extension are all explicitly worked out.

\section{Magnetic cone}\label{magnetic cone}
Let $G$ be a compact semi-simple Lie group, $\frak g$ be its Lie algebra and $\frak h$ be (one of) its Cartan subalgebra. The Lie bracket shall be denoted by $[, \,]$, and the natural pairing ${\frk g}^*\otimes {\frk g}\to \mathbb R$ shall be written as $<,\, >$. By convention, the elements of $i\frak g$ (rather than $\frk g$) shall be viewed as linear functions on $\frak g^*$: for $X\in i\frk g$ and $\xi\in \frk g^*$ we have 
\begin{eqnarray}
X(\xi):=<\xi, iX>.
\end{eqnarray}
It is well known that the Lie algebra structure on $\frk g$ defines a natural Poisson structure on $\frk g^*$: Let $X, Y\in i\frak g$, then the Poission bracket of the linear function $X$ with the linear function $Y$ is the linear function $i[iX, iY]$, i.e., 
\begin{eqnarray}
\{X, Y\}:=i[iX, iY].
\end{eqnarray}
The adjoint action of $G$ on $\frk g$ gives rise to the co-adjoint action of $G$ on $\frk g^*$. This co-adjoint action partitions $\frak g^*$ into a disjoint union of the co-adjoint orbits. Except for the trivial orbit $\{0\}$, the co-adjoint orbits are symplectic manifolds and are in fact the symplectic leaves of the Poisson manifold $\frk g^*$. 

The irreducible representations of $G$ are all finite dimensional and unitarizable, and are parametrized by their highest weights, which are precisely the dominant integral weights for $G$. A nice way to look at these representations is via the orbit theory of Kirillov \cite{Kirillov04}. Since $G$ is a compact semi-simple Lie group, its co-adjoint orbits are closed and each of them intersects the fundmental Weyl chamber $\frk{C}\subset \frk{h}^*$ in a single point. (Using the Killing metric on $\frk g$, $\frk{h}^*$ can be naturally imbedded into $\frk{g}^*$, so $\frk{C}\subset \frk{g}^*$.) An orbit $\mathcal O$ is called integral if the unique point in $\mathcal{O}\cap \frk{C}$ belongs to the weight lattice of $G$. The highest weight theory can be restated in the form of a bijection between the set of integral co-adjoint orbits and the set of equivalence classes of irreducible unitary representations of $G$: the highest weight representation $L(\lambda)$ with highest weight $\lambda\in\frk{C}$ corresponds to the integral co-adjoint orbit $G\cdot \lambda$.

\vskip 10pt
In the remainder of this section, we take $G=\mathrm{SO}(2k)$, $\frk{g}=\frk{so}(2k)$, $\frk{h}=\oplus_{i=1}^k \frk{so}(2)$ so that an element of $\frk h$ in the defining representation of $\frk{so}(2k)$ is a block-diagonal matrix whose diagonal blocks are $2\times 2$ real skew-symmetric matrices. Hereafter, we shall use $\gamma_{ab}$ ($1\le a, b\le 2k$) to denote the element of $i\frk{g}$ such that in the defining representation of $\frk{g}$, $i\gamma_{ab}$ is represented by the skew-symmetric real symmetric matrix whose $(a, b)$-entry is $-1$, $(b, a)$ entry is $1$, and all other entries are $0$. For the invariant metric on $\frk{g}$, we take the following one: Let $\rho$ be the defining representation of $\frk{g}$, for $\xi$, $\eta$ in $\frk{g}$, we have
\begin{eqnarray}\label{Killing}
(\xi, \eta)=-{1\over 2}\tr \left(\rho(\xi)\, \rho(\eta)^T\right). 
\end{eqnarray} Then $i\gamma_{ab}$ ($1\le a< b\le 2k$) form an orthonormal basis for $\frk g$. One can verify that the following Poisson bracket relations hold:
\begin{eqnarray}\label{PoissonRelGamma}
\{\gamma_{ab},\gamma_{cd}\}=-\delta_{ac}\gamma_{bd}+\delta_{bc}\gamma_{ad}-\delta_{bd}\gamma_{ac}+\delta_{ad}\gamma_{bc}.
\end{eqnarray}

Let $\sigma_+=\sum_{a=1}^k i\gamma_{2a-1, 2a}\, \in \frk{h}$. Via the invariant metric (\ref{Killing}), $\sigma_+$ is turned into an element of $\frk g^*$, which shall be denoted by $\sigma_+^*$. Let $G'=G\times \mathbb R_+$. The co-adjoint action of $G$ on $\frk g^*$ and the scaling action on $\frak g^*$ commute, so they gives rise to an action of $G'=G\times \mathbb R_+$ on $\frk g^*$. The orbit $G'\cdot \sigma_+^*$ shall be referred to as the {\bf positive magnetic cone} $\mathcal M_+$ for $G$.

Let $\sigma_-=i\gamma_{12}+\cdots+i\gamma_{2k-3, 2k-2}-i\gamma_{2k-1, 2k}$. Via the invariant metric (\ref{Killing}), $\sigma_-$ is turned into an element of $\frk g^*$, which shall be denoted by $\sigma_-^*$. The orbit $G'\cdot \sigma_-^*$ shall be referred to as the {\bf negative magnetic cone} $\mathcal M_-$ for $G$.

One can check that $\mathcal M_+$ and $\mathcal M_-$ are disjoint from each other, and $0\in\frk g^*$ is the unique extra limit point of $\mathcal M_+$ and also $\mathcal M_-$. We denote by $\mathcal M$ the union of these three $G'$ orbits: $\mathcal M_+$, $\{0\}$, and $\mathcal M_-$.  Then $\mathcal M$ is topologically closed and shall be referred to as the {\bf magnetic cone} for $G$.  

We shall show that the magnetic cone for $G$ is a real algebraic set inside $\frk{g}^*$. To do that, we let $Q$ be the homogenous quadratic polynomial function ${1\over 2k}\sum_{1\le a, b \le 2k}(\gamma_{ab})^2$ on $\frk g^*$. Then $Q$ is invariant under the action of $G$: Let $g\in G$, then $g\cdot \gamma_{ab}=g_{aa'}g_{bb'}\gamma_{a'b'}$ where $[g_{aa'}]\in \mathrm{SO}(2n)$, so
\begin{eqnarray}
g\cdot Q &=& {1\over 2k}\sum_{1\le a, b \le 2k}(g\cdot \gamma_{ab})^2\cr
&=& {1\over 2k}\sum_{1\le a, b \le 2k} g_{aa'}g_{bb'}\gamma_{a'b'} g_{aa''}g_{bb''}\gamma_{a''b''} \cr
&=& {1\over 2k} \delta_{a'a''}\delta_{b'b''}\gamma_{a'b'} \gamma_{a''b''} \cr
&=& {1\over 2k} \sum_{1\le a', b' \le 2k}\gamma_{a'b'} \gamma_{a'b'} = Q\nonumber.
\end{eqnarray} A similar computation shows that if $\sum_{a=1}^{2k}\gamma_{ab}\gamma_{ac}=\delta_{bc} Q$ hold for $1\le b, c\le 2k$, then  $\sum_{a=1}^{2k}(g\cdot \gamma_{ab})(g\cdot \gamma_{ac})=\delta_{bc}Q$ hold for $1\le b, c\le 2k$ and any $g\in G$, in fact, for any $g\in G'$. 

\begin{Prop}\label{propQE}
The magnetic cone $\mathcal M$ for $\mathrm{SO}(2k)$ is the real algebraic set defined by the homogenous quadratic equations
$$
\sum_{a=1}^{2k}\gamma_{ab}\gamma_{ac}=\delta_{bc}Q \quad 1\le b, c\le 2k.
$$
\end{Prop}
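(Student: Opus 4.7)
The natural strategy is to encode the problem in terms of the skew-symmetric matrix $A(\xi) \in \frk{so}(2k)$ with entries $A_{ab}(\xi) = \gamma_{ab}(\xi)$, and to rewrite the defining equations compactly. Since $A$ is skew-symmetric, $\sum_a \gamma_{ab}\gamma_{ac} = (A^T A)_{bc} = -(A^2)_{bc}$, so the system simply says $A(\xi)^T A(\xi) = Q(\xi)\, I_{2k}$, equivalently $A(\xi)^2 = -Q(\xi)\, I_{2k}$. It is also convenient to remark that under the identification $\frk g \cong \frk g^*$ induced by the metric (\ref{Killing}), the coordinate $\gamma_{ab}(\xi)$ is nothing but the $(a,b)$-entry of the skew-symmetric matrix representing $\xi$. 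In particular, $A(\sigma_+^*)$ is the block-diagonal $J_k := \mathrm{diag}(J, \ldots, J)$ with $J = \begin{pmatrix} 0 & 1 \\ -1 & 0\end{pmatrix}$, $A(\sigma_-^*) = \mathrm{diag}(J,\ldots, J, -J)$, and $Q(\sigma_\pm^*) = 1$.

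For the inclusion $\mathcal M \subseteq V$ (where $V$ denotes the algebraic set on the right-hand side), note that both $Q$ and the quadratic polynomials $\sum_a \gamma_{ab}\gamma_{ac} - \delta_{bc}Q$ are $G'$-invariant, as already observed in the discussion preceding the proposition. So it suffices to verify the equations at the three representatives $0$, $\sigma_+^*$, $\sigma_-^*$. At $0$ everything vanishes, and at $\sigma_\pm^*$ one checks directly that $A(\sigma_\pm^*)^T A(\sigma_\pm^*) = I = Q(\sigma_\pm^*)\, I$.

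The main step is the reverse inclusion $V \subseteq \mathcal M$. Given $\xi \in V$, set $c := Q(\xi)$, which is automatically $\geq 0$ since $Q$ is a sum of squares. If $c = 0$, then $A(\xi)^T A(\xi) = 0$ forces $A(\xi) = 0$, i.e.\ $\xi = 0 \in \mathcal M$. If $c > 0$, then $U := c^{-1/2} A(\xi)$ is a real matrix satisfying $U^T U = I$ and $U^T = -U$, i.e.\ $U$ is an orthogonal complex structure on $\mathbb R^{2k}$. The key classical fact is that $\mathrm{SO}(2k)$ acts on the set of such $U$'s with exactly two orbits, distinguished by the sign of the Pfaffian, with representatives $J_k$ and $\mathrm{diag}(J,\ldots,J,-J)$; this can be proved either by the standard $\mathrm{SO}(2k)$-normal form of skew-symmetric matrices or by observing that $U$ makes $\mathbb R^{2k}$ into a complex vector space and classifying the resulting orientation. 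Thus there exists $g \in \mathrm{SO}(2k)$ with $g \cdot U \in \{J_k, \mathrm{diag}(J,\ldots,J,-J)\} = \{A(\sigma_+^*), A(\sigma_-^*)\}$, and hence $(\sqrt c, g) \in G'$ carries $\xi$ to either $\sigma_+^*$ or $\sigma_-^*$. Therefore $\xi \in \mathcal M_+ \cup \mathcal M_-$, completing the proof.

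The main obstacle is just this classification of orthogonal skew-symmetric matrices under $\mathrm{SO}(2k)$-conjugation; everything else is bookkeeping (computing $A(\sigma_\pm^*)$ and $Q(\sigma_\pm^*)$ explicitly and invoking the $G'$-invariance established in the paragraph before the proposition).
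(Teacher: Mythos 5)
Your proof is correct, and its skeleton is the same as the paper's: the inclusion of $\mathcal M$ into the algebraic set follows from invariance of the system together with a check at the representatives $0$, $\sigma_\pm^*$, and the reverse inclusion comes down to the $\mathrm{SO}(2k)$-normal form of skew-symmetric matrices. Where you differ is in how the reverse inclusion is packaged: the paper conjugates $\xi$ into the Cartan subalgebra first, writes $\sigma=x_1i\gamma_{12}+\cdots+x_ki\gamma_{2k-1,2k}$, and notes that the equations with $b=c=2j$ force $|x_1|=\cdots=|x_k|$, hence $\sigma^*\in\mathcal M$; you instead impose the equations globally in the form $A(\xi)^TA(\xi)=Q(\xi)I$, normalize, and invoke the classification of orthogonal complex structures on $\mathbb R^{2k}$ into exactly two $\mathrm{SO}(2k)$-orbits distinguished by the Pfaffian. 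The two arguments rest on the same linear-algebra fact, but yours is a bit more conceptual (it makes visible the identification of the nonzero orbits with $\mathrm{SO}(2k)/\mathrm{U}(k)$ that the paper only remarks on later), while the paper's is a shorter direct computation, and you also treat the $Q=0$ case and the sign/orientation issue explicitly where the paper leaves them implicit. Two small wording corrections: the individual polynomials $\sum_a\gamma_{ab}\gamma_{ac}-\delta_{bc}Q$ are not $G'$-invariant --- they are the entries of $A^TA-QI$ and transform by $g(\cdot)g^T$, and $Q$ itself scales under $\mathbb R_+$; what is invariant, and all your argument needs, is their common zero set, which is precisely what the paragraph preceding the proposition establishes. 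Likewise, with the paper's convention that $\rho(i\gamma_{ab})$ has $(a,b)$-entry $-1$, the coordinate matrix $A(\xi)$ is the negative of the defining-representation matrix of the metric dual of $\xi$; your explicit values of $A(\sigma_\pm^*)$ are nevertheless correct, and the conclusion is unaffected since both the equations and the union $\mathcal M_+\cup\mathcal M_-$ are stable under $A\mapsto -A$.
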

\begin{proof}  1) We need to check that the identities hold on $\mathcal M$. Let us do the checking on $\mathcal M_+$. Due to the remark we made right before the statement of this proposition, it suffices to
verify the identities at point $\sigma_+^*$, i.e., the identities
$$
\sum_{a=1}^{2k}(i\gamma_{ab}, \sigma_+)( i\gamma_{ac}, \sigma_+)={\delta_{bc}\over k} \sum_{1\le a < b \le 2k}(i\gamma_{ab}, \sigma_+)^2. 
$$ These identities are clearly true because $\sigma_+=i\gamma_{12}+i\gamma_{34}+\cdots+i\gamma_{2k-1, 2k}$. Similarly, the identities hold on $\mathcal M_-$. Therefore, the identities hold on the magnetic cone. 

2) We need to check that the identities hold only on $\mathcal M$. It suffices to check that if the identities \begin{eqnarray}\label{identities}
\sum_{a=1}^{2k}(i\gamma_{ab}, \sigma)(i \gamma_{ac}, \sigma)={\delta_{bc}\over k} \sum_{1\le a < b \le 2k}(i\gamma_{ab} \sigma)^2,
\end{eqnarray} hold for $\sigma = x_1i\gamma_{12}+x_2i\gamma_{34}+\cdots+x_ki\gamma_{2k-1, 2k}$, then $|x_1| = |x_2| =\cdots = |x_k|$, so $\sigma^*\in \mathcal M$. This is indeed the case because, by letting $b=c=2j$ in identities (\ref{identities}), we have
$$
x_j^2={1\over k}(x_1^2+\cdots+x_k^2)
$$ which is the same for all $j$ between $1$ and $k$.
\end{proof}
We conclude this section by introducing a magnetic charge function $m$: $\mathcal M\to \mathbb R$. By definition, $|m(\xi)|=|\xi|$ and the sign of $m(\xi)$ is $\pm$ if $\xi\in \mathcal M_\pm$. It is clear that $m$ is invariant under the co-adjoint action of $G$ and an orbit inside the magnetic cone is uniquely determined by the constant value that $m$ takes. For this reason, if an orbit on which $m$ takes the constant value $\mu$, then this orbit is denoted by $\mathcal O_\mu$ and is referred to as the {\bf magnetic orbit with magnetic charge $\mu$}. Note that $\mathcal O_\mu$ is a point if $\mu=0$ and is diffeomorphic to the compact hermitian symmetric space $\mr{SO}(2k)/\mr{U}(k)$ if $\mu\neq 0$. It is a well-known fact that this compact hermitian symmetric space is the space of complex structures on $\mathbb R^{2k}$ compatible with the standard inner product on $\mathbb R^{2k}$. 

The special symplectic leaves $\mathcal O_\mu$ are the ones and the only ones on which the homogenous quadratic equations in proposition \ref{propQE} are satisfied, or equivalently they are the ones and the only ones such that part 2) of Lemma \ref{lemma} or Theorem \ref{main} holds.

\section{Sternberg's phase spaces and magnetic leaves}\label{phase space}
This section is mainly a review of Sternberg's phase spaces, cf. Refs \cite{Sternberg77, Weinstein78, Montgomery84}.  Let $X$ be a smooth $n$-manifold, $G$ a compact connected Lie group with Lie algebra $\frk g$, and $P\to X$ a principal $G$-bundle over $X$ with a fixed connection $\nabla$. Denote by $P^\sharp\to T^*X$ the pullback bundle of this principal $G$-bundle by the bundle projection map $T^*X\to X$, and 
\begin{eqnarray}
E^\sharp\to T^*X
\end{eqnarray} be the co-adjoint bundle, i.e., the associated real vector bundle with $\frk g^*$ as its fiber. 
\begin{Prop}
With the set-up given above, the connection $\nabla$ turns $E^\sharp$ into a Poisson manifold. 
\end{Prop}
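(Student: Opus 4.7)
The plan is to realize $E^\sharp$ as the Poisson reduction $T^*P/G$, where $G$ acts on $P$ by its given principal-bundle action and this lifts canonically (via cotangent lifts) to a symplectic action on $T^*P$ equipped with the canonical symplectic form $\omega_{\mr{can}}$. Because the $G$-action on $P$ is free and proper, so is the lifted action on $T^*P$; and because this lifted action preserves $\omega_{\mr{can}}$, the quotient $T^*P/G$ carries a unique Poisson structure such that the projection $T^*P\to T^*P/G$ is a Poisson map on $G$-invariant functions. This is standard Marsden--Weinstein--Poisson reduction, and involves no use of the connection $\nabla$.

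The role of $\nabla$ is to identify $T^*P/G$ with $E^\sharp$. Concretely, the connection gives, at each $p\in P$, a horizontal/vertical splitting $T_pP=H_p\oplus V_p$, hence a dual splitting $T_p^*P=H_p^*\oplus V_p^*$. The projection $P\to X$ identifies $H_p^*$ with $T^*_{\pi(p)}X$, and the connection form identifies $V_p^*$ with $\frk g^*$. Packaging these identifications together gives a $G$-equivariant isomorphism of vector bundles over $P$,
\begin{eqnarray*}
T^*P \;\cong\; \pi^*T^*X \;\oplus\; (P\times \frk g^*),
\end{eqnarray*}
where $G$ acts trivially on the $T^*X$-factor and by the co-adjoint action on the $\frk g^*$-factor. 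Taking $G$-quotients yields
\begin{eqnarray*}
T^*P/G \;\cong\; T^*X\times_X (P\times_G \frk g^*) \;=\; E^\sharp,
\end{eqnarray*}
and I would transport the Poisson structure from $T^*P/G$ to $E^\sharp$ via this diffeomorphism, which is the desired Poisson structure.

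The only real obstacle is checking that the connection-dependent identification is a $G$-equivariant diffeomorphism; once that is in hand, smoothness and the Poisson property are automatic. Equivariance reduces to two facts: the horizontal distribution $H$ is $G$-invariant (by definition of a principal connection), and the connection form is $\mr{Ad}$-equivariant (hence its dual identification $V^*_p\cong \frk g^*$ intertwines the cotangent-lifted $G$-action on $V^*_p$ with the co-adjoint action on $\frk g^*$). With these two properties the splitting descends to the quotient and gives the required isomorphism.

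For completeness I would note that in local bundle coordinates, writing the connection one-form as $A_\mu^a(x)\,dx^\mu$, the induced Poisson bracket on $E^\sharp$ is the familiar Sternberg--Weinstein ``minimal-coupling'' bracket: canonical between positions and momenta, with the extra term $\{p_\mu,p_\nu\}=F^a_{\mu\nu}\,\xi_a$ coming from the curvature $F=dA+\tfrac12[A,A]$, while the fiber coordinates $\xi_a$ satisfy the Lie--Poisson bracket of $\frk g^*$ and $\{p_\mu,\xi_a\}$ is governed by $A$. This local description could be given as an alternative proof, but the reduction picture above is cleaner and explains why the bracket is well-defined globally.
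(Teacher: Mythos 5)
Your argument is correct, but it follows a genuinely different route from the paper's. The paper proves the proposition indirectly: it simply exhibits the Poisson structure through the bracket relations among the local coordinate functions $(x^j,\pi_k,T_\alpha)$ on a good coordinate patch --- the Sternberg ``minimal coupling'' bracket with $\{\pi_j,\pi_k\}=-F_{jk}$, the Lie--Poisson bracket among the $T_\alpha$, and $\{T_\alpha,\pi_k\}=-C_{\alpha\beta}^{\gamma}A_k^{\beta}T_\gamma$ --- and defers the facts that these relations glue and satisfy the Jacobi identity to the cited works of Sternberg, Weinstein and Montgomery. You instead give the Weinstein reduction picture: the cotangent-lifted $G$-action on $T^*P$ is free, proper and symplectic, so $T^*P/G$ inherits a Poisson structure, and the connection provides the $G$-equivariant splitting $T^*P\cong \pi^*T^*X\oplus(P\times\frk{g}^*)$ whose quotient is $T^*X\times_X(P\times_G\frk{g}^*)=E^\sharp$. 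Your route buys global well-definedness and the Jacobi identity for free (both are inherited from the canonical symplectic form on $T^*P$) and makes transparent that the connection enters only through the identification, not through the reduced bracket itself; the paper's route buys the explicit local relations, which are precisely what all the subsequent computations (the monopole identities, the $\frk{so}(2,2k+2)$ realization, the equations of motion) are run on. As you note, your picture reproduces those local relations in a bundle chart, so the two are compatible up to sign conventions stemming from the paper's use of $i\frk{g}$-valued potentials ($F=dA+iA\wedge A$ there versus your $dA+\tfrac12[A,A]$).
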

As shown in Ref. \cite{Montgomery84}, the Hamilton's equation on $E^\sharp$ for a natural Hamiltonian function is Wong's equation in Ref. \cite{Wong70}, so $E^\sharp$ is referred to as the Wong's phase space. However, the more relevant phase spaces to us are Sternberg's phase spaces $\mathcal O^\sharp:= P^\sharp\times_G\mathcal O\subset E^\sharp$, which are known to be the symplectic leaves of the Wong's phase space $E^\sharp$, cf. Ref. \cite{Montgomery84}.

We shall produce an indirect proof of this proposition by describing the Poisson bracket relations among the local coordinate functions. We use $(U, \phi)$ to denote a local coordinate chart for $X$ and write $\phi=(x^1, \ldots, x^n)$, then the cotangent frame $dx^1$, \ldots, $dx^n$ over $U$ gives a local trivialization of the tangent bundle $T^*X\to X$ over $U$. This local trivialization shall always be assumed hereafter. By choosing $U$ small enough if necessary, we can assume that the principal $G$-bundle $P\to X$ is trivial over $U$. We shall fix a trivialization of the bundle $P\to X$ over $U$, then the connection $\nabla$ can be represented by a $i\frk{g}$-valued differential one-form $A$ on $U$, and the curvature can be represented by a $i\frk{g}$-valued differential two-form $F$ on $U$. We shall also fix a basis $\{T_a\}_{i=1}^D$ for $i\frk{g}$ where $D=\dim g$. Since each $T_a$ is a linear function on $\frk{g}^*$ by our convention, it is now clear that we have a local coordinate map from $E^\sharp$ to $\phi(U)\times {\mathbb R^n}\times {\mathbb R}^{D}={\mathbb R}^{2n+D}$. The domain of this coordinate map shall be referred to as a \emph{good coordinate patch} for $E^\sharp$. On such a good coordinate patch we have the following local coordinate functions: $$(x^1, \ldots, x^n, \pi_1, \ldots, \pi_n, T_1, \ldots, T_D).$$
In terms of this local coordinate functions on $E^\sharp$, the basic Poisson bracket relations are
\begin{eqnarray}
\left\{
\begin{array}{lll}
\{x^j, x^k\}=0, & \{x^j,\pi_k\}=\delta_k^j, & \{\pi_j, \pi_k\}=-F_{jk}, \\
\\
\{T_\alpha, T_\beta\}=-C_{\alpha\beta}^\gamma T_\gamma, & \{T_\alpha, x_k\}=0,&
 \{T_\alpha, \pi_k\}= - C_{\alpha\beta}^\gamma A_k^\beta T_\gamma,
 \end{array}\right.
\end{eqnarray} where $C_{\alpha\beta}^\gamma$ s are the structure constants of $\frk g$ with respect to basis $iT_\alpha$. Note that we have the vector potential $A=A_k\, dx^k$ with $A_k=A_k^\alpha T_\alpha$, the gauge field $F:=dA+iA\wedge A ={1\over 2}F_{jk}\, dx^j\wedge dx^k$ with $F_{jk}=F_{jk}^\alpha T_\alpha$, so, if we use $p=p_k\, dx^k$ to denote the canonical momentum, we have $\pi_k=p_k+A_k$. Note also that $A_k^\alpha$ and $F_{jk}^\alpha$ are all real-valued functions of $x^1$, ... , $x^n$. Finally, we remark that $iF=\Omega$ and $iA=\omega$ where $\Omega$ is the curvature form and $\omega$ is the connection form.

We conclude this section by introducing a few more spaces: $E_\sharp:=P\times_G \frk{g}^*$, $\mathcal O_\sharp:=P\times_G \mathcal O$. In the case $G=\mathrm{SO}(2k)$,  $X$ is $\mathbb R^{2k+1}_*:=\mathbb R^{2k+1}\setminus\{0\}$, the co-adjoint orbit $\mathcal O$ is a magnetic orbit, and the connection on $P\to X$ is a generalized Dirac monopole, a key concept which shall be reviewed in the next section, we let ${\mathcal M}^\sharp:=P^\sharp \times_G {\mathcal M}$ and ${\mathcal M}_\sharp=P \times_G {\mathcal M}$. (recall that $\mathcal M$ denotes the magnetic cone for $\mathrm{SO}(2k)$.)  It will be clear in the last section that there is a bijection between the symplectic leaves of ${\mathcal M}^\sharp$ and the classical Kepler problems (with magnetic charges) in dimension $2k+1$: the phase space of the classical Kepler problem with magnetic charge $\mu$ is the symplectic leave $\mathcal {O_\mu}^\sharp$, where $O_\mu$ is the magnetic orbit with magnetic charge $\mu$. A symplectic leave of $E^\sharp$ of the form $\mathcal {O_\mu}^\sharp$ shall be referred to as the {\bf magnetic leave} of magnetic charge $\mu$. Note that the dimension of this magnetic leave is $4k+2$ if $\mu=0$ and is $k^2+3k+2$ if $\mu\neq 0$. 

 \section {Generalized Dirac monopoles}\label{monopoles}
The Sternberg's phase spaces discussed in the above shall be specialized to our needs. In this specialization, $G=\mathrm{SO}(n-1)$, $X$ is $\mathbb R^n_*:=\mathbb R^n\setminus\{0\}$, the co-adjoint orbit $\mathcal O$ is a magnetic orbit, and the connection on $P\to X$ is a generalized Dirac monopole \cite{Cotaescu05, meng04}.
 
The discussion of generalized Dirac monopoles starts with the principal $G$ bundle over $\mathrm{S}^{n-1}$:
$$\begin{array}{c}
 \mathrm{SO}(n)\cr
 \Big\downarrow\cr
 \mathrm{S}^{n-1}. 
\end{array}
 $$This bundle comes with a natural connection
$$\omega (g):=\mathrm{Pr}_{\frk{so}(n-1)}\left(g^{-1}dg\right),$$
where $g^{-1}dg$ is the Maurer-Cartan form for $ \mathrm{SO}(n)$, so it is an $\frk{so}(n)$-valued differential one form on $\mathrm{SO}(n)$, and $\mathrm{Pr}_{\frk{so}(n-1)}$ denotes the orthogonal projection of $\frk{so}(n)$ onto $\frk{g}:=\frk{so}(n-1)$.

Under the map
\begin{eqnarray}
\pi:  \mathbb R^n_* & \to & \mathrm{S}^{n-1}\cr
 \vec  r &\mapsto &{\vec r \over |\vec r|},
\end{eqnarray}
the above bundle and connection are pulled back to a principal $G$-bundle 
\begin{eqnarray}
\begin{array}{c}
 P\cr
 \Big\downarrow\cr
  \mathbb R^n_* 
\end{array}
\end{eqnarray} with a connection. This connection, originally introduced in Ref. \cite{meng04} and also independently in Ref. \cite{Cotaescu05}, extends the Dirac monopole to high dimensions, is referred to as the generalized Dirac monopole in dimension $n$.
 
We shall write $\vec r =(x^1, \ldots, x^n)$ for a point in $\mathbb R^n_*$ and $r$ for the length of $\vec r$. Sometime $x^i$ is also written as $x_i$. The small Lartin letters $j$, $k$, etc. be indices that run from $1$ to $n$, and the small Latin letters $a$, $b$, etc. be indices that run from $1$ to $n-1$. To do local computations, we need to choose a trivialization on $U$ which is $\mathbb R^n$ with the negative $n$-th axis removed and then write down the gauge potential explicitly. We have done that before in Eq. (10) of Ref.
\cite{meng04}.  Note that, the gauge potential $A=A_k\, dx^k$
from Eq. (10) of Ref. \cite{meng04} can be written as
\begin{eqnarray}\label{mnple}
A_n =0,\hskip 20 pt A_b=-{1\over r(r+x_n)}x^a\gamma_{ab}
\end{eqnarray}
where $\gamma_{ab}={i\over 4}[\gamma_a,\gamma_b]\in i\frk g$ with $\gamma_a$
being the ``gamma matrix" for physicists, $x^a\gamma_{ab}$ means $\sum_{a=1}^{n-1}x^a\gamma_{ab}$, something we shall assume whenever there is a repeated index.  Note that $\gamma_a=i e_a$
with $e_a$ being the element in the Clifford algebra that corresponds to the $a$-th standard coordinate vector of $\bb
R^{n-1}$.

It is straightforward to calculate the gauge field strength
$F_{jk}=
\partial_jA_k-\partial_k A_j + i [A_j, A_k]$ and get
\begin{eqnarray}
F_{nb} &=& {1\over r^3}x^a\gamma_{ab},\\
F_{ab}& = & -{1\over r^2}\gamma_{ab} + {x_a
x^c\gamma_{cb}-x_b x^c\gamma_{ca}\over
r^3(r+x_n)}\cr
&= & -{1\over r^2}(\gamma_{ab} +x_aA_b - x_bA_a).
\end{eqnarray}

Hereafter we assume that $n=2k+1$. The following lemma, whose quantum version first appeared in Ref. \cite{meng07}, is crucially used in this article.
\begin{Lem}\label{lemma}
Let $Q={1\over 2k}\sum_{a, b}(\gamma_{ab})^2$ and $\nabla_k=\partial_k+iA_k$. For the gauge potential $A$ defined in Eq. (\ref{mnple}),  the following statements are true.

1) As functions on the good coordinate patch, 
\begin{eqnarray}\label{Id1}
x^k A_k=0,\quad x^j F_{jk}=0,\quad r^4\sum_{i,j}(F_{ij})^2=2kQ
\end{eqnarray}
\begin{eqnarray}\label{Id2}
\nabla_l F_{jk}={1\over r^2}\left( -x^j
F_{lk}-x^k F_{jl}-2x^l F_{jk} \right), \quad \sum_j \nabla_j F_{jk}=0
\end{eqnarray}
\begin{eqnarray}\label{Id3}
r^4\{F_{jk}, F_{lm}\}=(r^2\delta_{jl}-x_j x_l)F_{km}+\mbox{the remaining three terms}.
\end{eqnarray}

2) The identities
\begin{eqnarray}\label{Id4}
r^4\sum_iF_{ij} F_{ij'}=Q\left(\delta_{jj'}-{x_j x_{j'}\over
r^2}\right)
\end{eqnarray} hold at the point $[(p^\sharp, \xi)]\in E^\sharp$ if and only if the following set of homogeneous quadratic equations $\sum_a \gamma_{ab}\gamma_{ab'}=\delta_{bb'}Q$ hold at $\xi$, i.e., if and only if $\xi$ is inside the magnetic cone $\mathcal M$. 

\end{Lem}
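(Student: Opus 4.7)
The plan is to prove all four families of identities by direct substitution from the explicit gauge potential in Eq. (\ref{mnple}) and the derived field strengths $F_{nb},F_{ab}$, combined with the basic Poisson bracket relations on the good coordinate patch. Throughout I will split each sum over $1\le j\le n=2k+1$ into the $n$-th term plus the sum over $1\le a\le 2k$, and use the antisymmetry of $\gamma_{ab}$ whenever an index is contracted against $x^a$.

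For Eq. (\ref{Id1}): the identity $x^kA_k=0$ is immediate from $A_n=0$ and $x^ax^b\gamma_{ab}=0$. The relation $x^jF_{jk}=0$ follows by treating the cases $k=n$ and $k=c$ separately; in both cases the factors $r+x_n$ cancel between $F_{ab}$ and $A_b$. The trace identity $r^4\sum_{i,j}(F_{ij})^2=2kQ$ comes from expanding $\sum_{a,b}(\gamma_{ab}+x_aA_b-x_bA_a)^2$: cross terms of the form $x_aA_b\gamma_{ab}$ vanish by $x^kA_k=0$, the $(x_aA_b)^2$ pieces cancel against $2\sum_bF_{nb}^2$, and the bare $\sum_{a,b}\gamma_{ab}^2$ term gives $2kQ$ by definition.

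For Eq. (\ref{Id2}): since the covariant derivative $\nabla_lF_{jk}=\partial_lF_{jk}+i[A_l,F_{jk}]$ is tensorial, I compute it case-by-case on the index pair $(j,k)$. Differentiating $1/r^m$ produces $-mx_l/r^{m+2}$, and the commutators with $A_l$ supply the remaining pieces that reassemble into the claimed symmetric combination; the source-free equation $\sum_j\nabla_jF_{jk}=0$ then follows by contracting $l$ with $j$ and invoking $x^jF_{jk}=0$. For Eq. (\ref{Id3}): since $F_{jk}$ is a position-dependent linear combination of $i\gamma_{ab}$'s, the bracket $\{F_{jk},F_{lm}\}$ reduces via $\{T_\alpha,T_\beta\}=-C_{\alpha\beta}^\gamma T_\gamma$ and (\ref{PoissonRelGamma}) to four Kronecker contractions on the $\gamma$-indices. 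Reassembling these contractions in terms of $F$'s and absorbing any residual $A_j$-dependence through $x^jF_{jk}=0$ yields the four-term expression with projector $r^2\delta_{jl}-x_jx_l$.

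For Part 2, I will first reduce to a convenient point via $\mathrm{SO}(2k+1)$-equivariance, rotating so that $\vec r=(0,\ldots,0,r)$. At this point $A$ vanishes, $F_{nb}=0$, and $F_{ab}=-\gamma_{ab}/r^2$. Then for $(j,j')=(b,b')$, $r^4\sum_iF_{ib}F_{ib'}$ collapses to $\sum_a\gamma_{ab}\gamma_{ab'}$ while the right-hand side of (\ref{Id4}) becomes $Q\delta_{bb'}$, so (\ref{Id4}) there is exactly the system $\sum_a\gamma_{ab}\gamma_{ab'}=\delta_{bb'}Q$; the remaining index pairs $(n,c)$ and $(n,n)$ produce identities that are automatic from these via (\ref{Id1}). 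Invoking Proposition \ref{propQE} finishes the equivalence with $\xi\in\mathcal M$. The main obstacle I anticipate is the bookkeeping in (\ref{Id3}): tracking the four-fold symmetrization together with the index relabellings from (\ref{PoissonRelGamma}), while re-expressing $\gamma$-combinations in terms of $F$'s modulo $A_j$-absorptions, is where sign and index errors are most likely.
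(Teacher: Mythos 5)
Your overall strategy would work, but it diverges from the paper in part 1) and one step fails as written. The paper's own proof verifies only $x^kA_k=0$ directly (by antisymmetry of $\gamma_{ab}$); for \emph{all} the remaining identities, in both parts, it uses exactly the symmetry argument you reserve for part 2): the connection is rotation invariant modulo gauge transformations and the identities are covariant under rotations combined with gauge transformations, so it suffices to check them at $\vec r_0=(0,\ldots,0,r)$, where $A_i=0$, $F_{na}=0$, $F_{ab}=-{1\over r^2}\gamma_{ab}$ make the verification nearly immediate — a big saving precisely for (\ref{Id3}), which you correctly single out as the most error-prone part of a brute-force check. Your direct-substitution route for part 1) is legitimate (it is the ``direct and lengthy calculation'' the paper alludes to), but it buys nothing beyond independence from the equivariance argument, at the cost of heavy bookkeeping; your part 2) argument (reduce to $\vec r_0$, read off $\sum_a\gamma_{ab}\gamma_{ab'}=\delta_{bb'}Q$ from the $(b,b')$ components, note the $(n,c)$ and $(n,n)$ components are automatic, then invoke Proposition \ref{propQE}) coincides with the paper's.

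The concrete flaw is in your sketch of $r^4\sum_{i,j}(F_{ij})^2=2kQ$. The cross terms do \emph{not} vanish: $2\sum_{a,b}\gamma_{ab}(x_aA_b-x_bA_a)=4\sum_{a,b}x_a\gamma_{ab}A_b=-{4\over r(r+x_n)}\sum_b\bigl(x_a\gamma_{ab}\bigr)^2$, and the identity $x^kA_k=0$ only controls $\sum_a x_aA_a$, not the contraction of $A_b$ against $x_a\gamma_{ab}$. Likewise the $A$-quadratic terms, $\sum_{a,b}(x_aA_b-x_bA_a)^2={2(r-x_n)\over r^2(r+x_n)}\sum_b\bigl(x_a\gamma_{ab}\bigr)^2$, do not cancel against $2r^4\sum_b(F_{nb})^2={2\over r^2}\sum_b\bigl(x_a\gamma_{ab}\bigr)^2$ by themselves. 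The identity is saved by a three-way cancellation, ${2\over r^2}-{4\over r(r+x_n)}+{2(r-x_n)\over r^2(r+x_n)}=0$, which leaves exactly $\sum_{a,b}(\gamma_{ab})^2=2kQ$. So either redo that computation with the full cancellation, or apply the rotation-plus-gauge reduction to $\vec r_0$ there as well, where the identity is read off instantly. The rest of your plan — the case-by-case computation of $\nabla_lF_{jk}$, obtaining $\sum_j\nabla_jF_{jk}=0$ by contracting (\ref{Id2}) with $x^jF_{jk}=0$, the fiberwise Lie--Poisson reduction of $\{F_{jk},F_{lm}\}$ via (\ref{PoissonRelGamma}) (valid since $\{T_\alpha,x_k\}=0$ and $F_{jk}$ depends only on $x$ and the $T$'s) — is sound in principle, though for (\ref{Id3}) you should expect the reassembly into $F$'s to require the explicit formulas for $F_{nb}$ and $F_{ab}$ rather than just $x^jF_{jk}=0$.
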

\begin{proof} Since $\gamma_{ab}=-\gamma_{ba}$, we have
$$
\sum_j x_jA_j =\sum_b x_bA_b=-{1\over r(r+x_n)}\sum_{a, b}x_ax_b\gamma_{ab}=0.
$$
The verification of the remaining identities is a direct and lengthy calculation. However, we note that the connection
is rotational invariant, i.e., under the rotations of $\mathbb R^{2k+1}$, $A$ (hence $F$) is invariant modulo gauge transformations. Combining with the transformation property of the remaining identities under rotations and gauge transformations, we just need to check the remaining identities at point
$\vec r_0=(0,\ldots, 0, r)$, a much easier task. At this point $\vec r_0$, since
\begin{eqnarray} 
A_i =0, \quad F_{na}=0, \quad F_{ab}=-{1\over r^2}\gamma_{ab},
\end{eqnarray} one can quickly finish the checking. 
\end{proof}

\section{The Poisson realization of $\frk{so}(2, 2k+2)$ on magnetic leaves}\label{PoissonRelization}
The goal in this section is to present and prove our main result:  the Poisson realization of $\frk{so}(2, 2k+2)$ on the magnetic leave ${\mathcal O_\mu}^\sharp$. We take the manifold $X$ to be $\mathbb R^{2k+1}_*$, the coordinate patch $U$ for $X$ to be $\mathbb R^{2k+1}$ with the negative $(2k+1)$-st axis removed, and the coordinate map $\phi$: $U\to \mathbb R^{2k+1}$ to be the inclusion, and the bundle $P\to X$ and its trivialization over $U$ to be the one given in the last section.  Note that this good coordinate patch for $E^\sharp$ is dense in $E^\sharp$. We let $x_j=x^j$, ${\vec r}=(x_1, \ldots, x_{2k+1})$ and $r=|\vec r|$, ${\vec \pi}=(\pi_1, \ldots, \pi_{2k+1})$ and $\pi=|\vec \pi|$. Recall from section \ref{phase space} that the basic Poisson bracket relations are
\begin{eqnarray} \label{PoissonRel}
\left\{
\begin{array}{rcl}
\{x_j, x_k\}=0, & \{x_j, \pi_k\}=\delta_{jk}, & \{\pi_j, \pi_k\}=-F_{jk}, \\
\\
\{T_\alpha, T_\beta\}=-C_{\alpha\beta}^\gamma T_\gamma, & \{T_\alpha, x_k\}=0, &
 \{T_\alpha, \pi_k\}= - C_{\alpha\beta}^\gamma A_k^\beta T_\gamma.
\end{array}\right.
\end{eqnarray}
Let us first introduce
\begin{eqnarray}\left\{\begin{array}{rcl}
J_{i, 0} &: =& r\pi_i,\cr 
J_{2k+2, 0} &: = & {1\over 2} (r  \pi^2+{Q\over r}) -{1\over 2}r,\cr
J_{-1, 0} &:= & {1\over 2} (r  \pi^2+{Q\over r})+ {1\over 2}r.
\end{array}\right.
\end{eqnarray} where $Q={1\over 2k}\sum_{a, b}(\gamma_{ab})^2$ as in Lemma \ref{lemma}.
Then we let
\begin{eqnarray}\left\{\begin{array}{rcl}
J_{i,j}  &:= & -\{J_{i, 0}, J_{j, 0} \},\cr 
J_{i, 2k+2} &:= &-\{J_{i,0}, J_{2k+2, 0}\},\cr 
J_{i, -1}&:= &-\{J_{i,0}, J_{-1, 0}\},\cr
J_{2k+2, -1} &:= & -\{J_{2k+2,0}, J_{-1, 0}\} .
\end{array}\right.
\end{eqnarray}
By a small computation based on the basic Poisson bracket relations (\ref{PoissonRel}), one arrives at the following explicit expression:
\begin{eqnarray}\left\{\begin{array}{rcl}
J_{i,j}  &= & x_i\pi_j - x_j\pi_i+r^2F_{ij},\cr 
J_{i, 2k+2} &= &{1\over 2}x_i\pi^2 - \pi_i (\vec r\cdot  \vec \pi)+r^2F_{ij}\pi_j-{Q\over 2r^2}x_i-{1\over 2}x_i,\cr 
J_{i, -1}&= &{1\over 2}x_i\pi^2 - \pi_i (\vec r\cdot  \vec \pi)+r^2F_{ij}\pi_j-{Q\over 2r^2}x_i+{1\over 2}x_i,\cr 
J_{2k+2, -1} &= & \vec r\cdot\vec  \pi.
\end{array}\right.
\end{eqnarray} Here we assume the repeated indices are dummy, i.e., are summed up.

Let the capital Latin letters such as $A$, $B$, etc. be indices that run from $-1$ to $2k+2$. Note that, as a real function on $E^\sharp$, $F_{ij}$ factorizes through $E_\sharp$, moreover, for any point $[(p, \xi)]\in E_\sharp=P\times _G \frk{g}^*$, we have \begin{eqnarray}\label{F-mean}
F_{ij}([(p, \xi)]) =<\xi, F_{ij}(\pi(p))>.
\end{eqnarray} 
Here, $\pi$: $P\to X$ is the bundle projection.
Therefore $J_{AB}$ s are independent of the local trivialization we have fixed for the bundle $P\to X$, so they are defined on the whole $E^\sharp$, not just on a dense subset of $E^\sharp$. Recall that $\mathcal M^\sharp=P^\sharp\times_G \mathcal M$ is the union of all magnetic leaves of $E^\sharp$.

\begin{Thm}\label{main} 
Viewing $J_{AB}$ s as functions on $\mathcal M^\sharp$, we have the following two statements. 

1) $J_{AB}$ s satisfy the following Poisson bracket relations:
\begin{eqnarray}\label{cmtr} \{J_{AB},
J_{A'B'}\}=-\eta_{AA'}J_{BB'}-\eta_{BB'}J_{AA'}+\eta_{AB'}J_{BA'}+\eta_{BA'}J_{AB'}
\end{eqnarray}
where the indefinite metric tensor $\eta$ is ${\mr
{diag}}\{1,1, -1,\ldots, -1\}$ relative to the following order: $-1$, $0$,
$1$, \ldots, $2k+2$ for the indices.

2)  $J_{AB}$ s satisfy the following quadratic relations
\begin{eqnarray}\label{QuadraticRel} \eta^{AA'} J_{AB} J_{A'C}=\eta_{BC}Q.
\end{eqnarray} 
\end{Thm}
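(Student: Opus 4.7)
The plan is to verify both statements by direct computation organized around the fact that, for $A,B\neq 0$, the generator $J_{AB}$ is by construction $-\{J_{A,0},J_{B,0}\}$. Consequently $\{J_{A,0},J_{B,0}\}=-J_{AB}$, which exactly matches the right-hand side of (\ref{cmtr}) when one pair of indices is $\{0,0\}$, since $\eta_{00}=1$ and $\eta_{0A}=0$ for $A\neq 0$; the antisymmetry $J_{0,A}=-J_{A,0}$ handles the reversed-index versions. Thus part 1) reduces to verifying $\{J_{A,0},J_{BC}\}$ and $\{J_{AB},J_{A'B'}\}$ for $B,C,B'\neq 0$.

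For these I would proceed in two passes. First, compute $\{J_{A,0},J_{BC}\}$ by writing $J_{BC}=-\{J_{B,0},J_{C,0}\}$ and applying the Jacobi identity to reduce to nested brackets of the form $\{J_{B,0},J_{CA}\}$; these are then evaluated from the explicit formulas using the basic Poisson relations (\ref{PoissonRel}) and identities (\ref{Id1})--(\ref{Id3}) of Lemma \ref{lemma}, with (\ref{Id3}) needed in particular to close the brackets among the angular-momentum-like $J_{ij}$'s. Second, express $\{J_{AB},J_{A'B'}\}$ via $J_{AB}=-\{J_{A,0},J_{B,0}\}$ and apply Jacobi once more to reduce to the previous case. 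Organizing by the natural $\mathfrak{so}(2k+1)$-grading --- rotations $J_{ij}$, vectorial $J_{i,A}$ for $A\in\{-1,0,2k+2\}$, and the scalar $\mathfrak{sl}(2,\mathbb{R})$-like triple $J_{-1,0},J_{2k+2,0},J_{2k+2,-1}$ --- keeps the bookkeeping under control.

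The main obstacle lies in the vector-vector brackets $\{J_{i,-1},J_{j,-1}\}$, $\{J_{i,2k+2},J_{j,2k+2}\}$, and $\{J_{i,2k+2},J_{j,-1}\}$: their expansion produces residual terms of the shape $r^4\sum_l F_{il}F_{jl}$, a quadratic expression in $F$ that cannot be written as a linear combination of the $J_{AB}$'s by means of (\ref{Id1})--(\ref{Id3}) alone. This is precisely where identity (\ref{Id4}) of Lemma \ref{lemma} intervenes: it collapses that sum to $Q(\delta_{ij}-x_ix_j/r^2)$, which can then be absorbed into the algebra. Since (\ref{Id4}) holds if and only if the fibre coordinate lies in the magnetic cone $\mathcal{M}$, this is the structural reason the realization closes only on $\mathcal{M}^\sharp$.

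For part 2), I would compute $\eta^{AA'}J_{AB}J_{A'C}$ directly for each pair $(B,C)$. The spatial-spatial case $(B,C)=(i,j)$ is the heart of the argument: expansion again yields precisely a combination involving $r^4\sum_l F_{il}F_{jl}$, and (\ref{Id4}) reduces everything to $-\delta_{ij}Q=\eta_{ij}Q$ after the kinematic terms are collected. The remaining cases, involving at least one index in $\{-1,0,2k+2\}$, are verified by similar but less delicate manipulations in which $Q$ enters directly through $J_{-1,0}$, $J_{2k+2,0}$ and the combinations $J_{-1,0}-J_{2k+2,0}=r$, $J_{i,-1}-J_{i,2k+2}=x_i$.
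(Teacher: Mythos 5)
Your strategy for part 1) is essentially the paper's: exploit the defining relations $J_{AB}=-\{J_{A,0},J_{B,0}\}$, the $\frk{so}(2k+1)$-equivariance (which the paper packages as Lemma \ref{lemma2}: $x_i$, $\pi_i$ transform as vectors and $F_{ij}$ as bi-vectors under the $J_{ij}$), and the identities of Lemma \ref{lemma}, with (\ref{Id4}) as the one ingredient valid only on the magnetic cone --- you correctly identify this as the structural reason the realization closes only on $\mathcal M^\sharp$. The organizational differences are minor: the paper introduces $X=r\pi^2+Q/r$, $Y=r$, $W_i=x_i$, $Z_i=\{X,J_{i,0}\}$ and treats four cases; it invokes (\ref{Id4}) already in the brackets $\{Z_i,X\}=0$ and $\{Z_i,J_{j,0}\}=-\eta_{ij}X$, and then obtains the hardest vector--vector bracket $\{Z_i,Z_j\}=0$ for free from the Jacobi identity applied to $Z_j=\{X,J_{j,0}\}$, whereas you would expand the vector--vector brackets directly and meet the $r^4\sum_l F_{il}F_{jl}$ terms there; both work, the Jacobi trick just spares the longest computation (the mixed bracket $\{Z_i,W_j\}$, by contrast, needs no identity beyond the basic relations). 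Where you genuinely diverge is part 2): you propose to verify $\eta^{AA'}J_{AB}J_{A'C}=\eta_{BC}Q$ case by case, using (\ref{Id4}) again in the spatial--spatial case, while the paper reduces everything, via part 1) and the observation of Ref. \cite{meng09}, to the single primary relation $\sum_i J_{i,0}^2+J_{2k+2,0}^2-J_{-1,0}^2=-Q$, whose verification is one line ($r^2\pi^2-XY=-Q$) and uses no identity from Lemma \ref{lemma} at all. Your route is more laborious but checks each relation on its own; the paper's buys brevity and makes transparent that, once the bracket relations hold, the entire quadratic system is forced by one scalar identity. One small point to make explicit in your write-up, as the paper does: since the $J_{AB}$ are globally defined on $\mathcal M^\sharp$, it suffices to verify all identities on the intersection with the dense good coordinate patch where the formulas of (\ref{PoissonRel}) and Lemma \ref{lemma} are available.
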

\begin{proof} One just needs to prove the two statements in a dense subset of $\mathcal M^\sharp$, such as the intersection of $\mathcal M^\sharp$ with the dense good coordinate patch for $E^\sharp$ mentioned in the beginning paragraph of this section. 
 
The quadratic relations in part 2) of this theorem can be reduced to a single one (referred to as the primary quadratic relation in Ref. \cite{meng09}):
\begin{eqnarray}\label{primary} \sum_{i=1}^{2k+1} J_{i,0}^2 +J_{2k+2, 0}^2-J_{-1, 0}^2=-Q
\end{eqnarray} because of the following observation initially discovered in Ref. \cite{meng09}:  thanks to the Poisson bracket relations in part 1) of Theorem \ref{main}, all the other quadratic relations in part 2) can be obtained by taking the Poisson bracket of the primary quadratic relation with some suitable $J_{AB}$. The checking of the primary quadratic relation (\ref{primary}) is very easy:
\begin{eqnarray}
 \sum_{i=1}^{2k+1} J_{i,0}^2 +J_{2k+2, 0}^2-J_{-1, 0}^2 =r\pi_i r\pi_i-XY= r^2\pi^2-(r^2\pi^2+Q)=-Q .\nonumber
\end{eqnarray}  
Therefore part 2) is proved modulo part 1). However, the proof of part 1) is quite involved, and the next subsection is wholly devoted to it.
\end{proof}

\subsection{Proof of part 1) of theorem \ref{main}}
The following lemma is quite useful in the proof of part 1) of Theorem \ref{main}.
\begin{Lem}\label{lemma2}
\begin{eqnarray}\left\{\begin{array}{rcl}
\{J_{ ij}, x_k\}& = &
-x_ i\delta_{jk}+x_j\delta_{ ik},\cr
\{J_{ ij}, \pi_k\}& = &
-\pi_ i\delta_{jk}+\pi_j\delta_{ ik},\cr
\{J_{ ij}, F_{ i'j'}\} &= &
\delta_{ i i'}F_{jj'}+\delta_{jj'}F_{ i i'}
-\delta_{ ij'}F_{j i'}-\delta_{j i'}F_{ ij'}.\end{array}\right.
\end{eqnarray}
\end{Lem}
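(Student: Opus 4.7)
\medskip

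\noindent\textbf{Proof proposal.} The plan is to attack all three identities by expanding the right-hand side of
\[
J_{ij}=x_i\pi_j-x_j\pi_i+r^2F_{ij}
\]
via the Leibniz rule and then reducing everything to the basic Poisson relations (\ref{PoissonRel}) together with the geometric identities (\ref{Id2}) and (\ref{Id3}) of Lemma \ref{lemma}.

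First I would dispose of $\{J_{ij},x_k\}$. Since $F_{ij}=F^{\alpha}_{ij}T_\alpha$ with $F^{\alpha}_{ij}$ a real-valued function of $x^1,\dots,x^n$ only, and since $\{x_l,x_k\}=0$ and $\{T_\alpha,x_k\}=0$, we get $\{F_{ij},x_k\}=0$, and similarly $\{r^2,x_k\}=0$. Only the two $x\pi$-terms then contribute through $\{x_j,\pi_k\}=\delta_{jk}$, giving the desired $-x_i\delta_{jk}+x_j\delta_{ik}$.

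Next I would turn to $\{J_{ij},\pi_k\}$. The new ingredients are $\{\pi_j,\pi_k\}=-F_{jk}$, $\{r^2,\pi_k\}=2x_k$, and, crucially, $\{F_{ij},\pi_k\}$. The last one is what forces us to look inside the gauge structure: writing $F_{ij}=F^\alpha_{ij}T_\alpha$ and applying $\{T_\alpha,\pi_k\}=-C^\gamma_{\alpha\beta}A^\beta_k T_\gamma$, one recognizes the result as the covariant derivative $\nabla_k F_{ij}=\partial_k F_{ij}+i[A_k,F_{ij}]$. Collecting everything,
\[
\{J_{ij},\pi_k\}=\delta_{ik}\pi_j-\delta_{jk}\pi_i-x_iF_{jk}+x_jF_{ik}+2x_kF_{ij}+r^2\nabla_k F_{ij}.
\]
Substituting the identity $r^2\nabla_k F_{ij}=x_iF_{jk}-x_jF_{ik}-2x_kF_{ij}$, which is (\ref{Id2}) of Lemma \ref{lemma}, the last four terms cancel exactly, leaving $-\pi_i\delta_{jk}+\pi_j\delta_{ik}$ as claimed.

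The third identity $\{J_{ij},F_{i'j'}\}$ is where the real work lies. Using $\{x_i,F_{i'j'}\}=0$ and $\{r^2,F_{i'j'}\}=0$ and the previous identification, the Leibniz rule yields
\[
\{J_{ij},F_{i'j'}\}=-x_i\nabla_j F_{i'j'}+x_j\nabla_i F_{i'j'}+r^2\{F_{ij},F_{i'j'}\}.
\]
Multiplying through by $r^2$, I would substitute (\ref{Id2}) for the two $r^2\nabla\cdot F$ factors and (\ref{Id3}) (with its three implicit symmetric companions dictated by the antisymmetries in $(i,j)$ and $(i',j')$) for $r^4\{F_{ij},F_{i'j'}\}$. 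The main obstacle, and really the only delicate point, is bookkeeping: the quadratic $x_\ast x_\ast F$ contributions from the covariant-derivative terms have to cancel exactly against the corresponding quadratic pieces coming from the four terms of $r^4\{F_{ij},F_{i'j'}\}$. Once that cancellation is observed, what survives is precisely
\[
r^2\bigl(\delta_{ii'}F_{jj'}+\delta_{jj'}F_{ii'}-\delta_{ij'}F_{ji'}-\delta_{ji'}F_{ij'}\bigr),
\]
and dividing by $r^2$ yields the announced formula. As in the proof of Lemma \ref{lemma}, one may if desired shorten the verification by exploiting rotational invariance and checking the tensorial identity at the distinguished point $\vec r_0=(0,\dots,0,r)$, where $A_i=0$ and the $F$'s degenerate to $-\gamma_{ab}/r^2$.
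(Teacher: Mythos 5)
Your proposal is correct and follows essentially the same route as the paper's own proof: expand $J_{ij}=x_i\pi_j-x_j\pi_i+r^2F_{ij}$ by the Leibniz rule, recognize $\{F_{ij},\pi_k\}$ as the covariant derivative $\nabla_kF_{ij}$, and then cancel the $x_\ast x_\ast F$ terms using identities (\ref{Id2}) and (\ref{Id3}) of Lemma \ref{lemma}. The asserted cancellation in the third bracket is exactly what the paper's computation verifies, so nothing essential is missing.
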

\begin{proof} 
\begin{eqnarray}
\{J_{ ij}, x_k\}& = &
\{x_ i\pi_j-x_j\pi_ i+r^2F_{ ij}, x_k\}=\{x_ i\pi_j-x_j\pi_ i,x_k\}\cr
&= &
-(x_ i\delta_{jk}-x_j\delta_{ ik}).\nonumber
\end{eqnarray}
\begin{eqnarray}
\{J_{ ij}, \pi_k\}& = &
\{x_ i\pi_j-x_j\pi_ i+r^2F_{ ij}, \pi_k\}\cr
&= &
\pi_j\delta_{ ik} -\pi_ i\delta_{jk}-x_ i
F_{jk}+x_j F_{ ik}+\nabla_k (r^2F_{ ij})\cr &=&
-\pi_ i\delta_{jk}+\pi_j\delta_{ ik}.\quad \mbox{By identity (\ref{Id2}) }\nonumber
\end{eqnarray}
\begin{eqnarray}
\{J_{ ij},F_{ i'j'}\} & = &
\{x_ i\pi_j-x_j\pi_ i+r^2F_{ ij},
F_{ i'j'}\}\cr &=& x_ i\{\pi_j,
F_{ i'j'}\}-x_j
\{\pi_ i,F_{ i'j'}\}+r^2\{F_{ ij},
F_{ i'j'}\} \cr &=& {x_ i\over r^2}(2x_j
F_{ i'j'}+x_{ i'} F_{jj'}+x_{j'}
F_{ i'j})-{x_j\over r^2} (2x_ i
F_{ i'j'}+x_{ i'} F_{ ij'}+x_{j'}
F_{ i' i})\cr & &+r^2\{F_{ ij}, F_{ i'j'}\}\quad\mbox{by identity (\ref{Id2})}
\cr &=& r^2\{F_{ ij}, F_{ i'j'}\}-{1\over
r^2}\left(-x_ i x_{ i'} F_{jj'}-x_ i x_{j'}
F_{ i'j}+x_j x_{ i'} F_{ ij'}+x_j
x_{j'} F_{ i' i}\right)\cr & = &
\delta_{ i i'}F_{jj'}+\delta_{jj'}F_{ i i'}
-\delta_{ ij'}F_{j i'}-\delta_{j i'}F_{ ij'}.\quad\mbox{by identity (\ref{Id3})}\nonumber
\end{eqnarray}
\end{proof} As a quick corollary, we have $\{J_{ ij}, r\}=\{J_{ ij}, \pi^2\}=0$, and
\begin{eqnarray}
\{J_{ ij},J_{ i'j'}\}  & = &
\delta_{ i i'}J_{jj'}+\delta_{jj'}J_{ i i'}
-\delta_{ ij'}J_{j i'}-\delta_{j i'}J_{ ij'}.
\end{eqnarray} So $J_{ ij}$ s satisfy the commutation relations among the standard basis elements of the real Lie algebra $\frk{so}(2k+1)$. Then Lemma \ref{lemma2} may be paraphrased as follows: under the action of $J_{ ij}$ s, $x_ i$ s and $\pi_ i$ s transform as
$\frk{so}(2k+1)$ vectors, and $F_{ ij}$ s transform as $\frk{so}(2k+1)$ bi-vectors. It is then clear that  $J_{2k+2, 0}$, $J_{-1, 0}$, and $J_{2k+2, -1}$ all transform as $\frk{so}(2k+1)$ scalars; $J_{i, 2k+2}$, $J_{i, -1}$, and $J_{i, 0}$ all transform as $\frk{so}(2k+1)$ vectors. Then it is clear that the Poisson relations (\ref{cmtr}) hold whenever $J_{ ij}$ appears on its left hand side.

By using the identity $x_i F_{ij}=0$, one can check that $\{J_{2k+2, -1}, \vec
r\}=-\vec r$,  $\{J_{2k+2, -1}, r\}=-r$, $\{J_{2k+2, -1},
{1\over r}\}={1\over r}$, $\{J_{2k+2, -1}, \vec \pi\}=\vec \pi$, $\{J_{2k+2, -1}, r^2F_{ij}\}=0$.
That is, $J_{2k+2, -1}$ is the \emph{dimension operator} in physics.
It is then clear that the Poisson relations (\ref{cmtr}) hold whenever $J_{2k+2, -1}$ appears on its left hand side.

\vskip 10pt
The remaining verifications are divided into four cases.

\underline{Case 1}.
\begin{eqnarray}
\{J_{i, 0}, J_{j, 0} \}=-J_{ij}, &\{J_{i,0},  J_{2k+2, 0}\}=- J_{i, 2k+2},\cr
  \{J_{i,0}, J_{-1, 0}\}= - J_{i, -1},  &\{J_{2k+2,0}, J_{-1, 0}\}= - J_{2k+2, -1}.\nonumber
\end{eqnarray} But these are just the defining relations. So case 1 is done.

To check the remaining cases, it is convenient to introduce 
\begin{eqnarray}
X & := &r\pi^2+{Q\over r}, \quad Y:=r, \quad  W_i :=\{Y, J_{i,0}\}=x_i \cr
Z_i & := & \{X, J_{i,0}\} = x_i \pi^2 - 2\pi_i (\vec r\cdot  \vec \pi)+2r^2F_{ij}\pi_j-{Q\over r^2}x_i. \nonumber
\end{eqnarray}

\underline{Case 2}.
\begin{eqnarray}
\{J_{i,-1},J_{-1,0}\}= J_{i, 0}, & \{J_{i,0},J_{2k+2,0}\}= 0,\cr
\{J_{i,2k+2},J_{-1,0}\}= 0, & \{J_{i,2k+2},J_{2k+2,0}\}= -J_{i,0},\nonumber
\end{eqnarray}
or equivalently,
\begin{eqnarray}
\{W_i, Y\}=\{Z_i, X\}=0,\quad \{W_i, X\}=\{Z_i, Y\}=2J_{i,0}.
\end{eqnarray}
\begin{proof} It is clear that $\{W_i, Y\}=0$. Now $\{W_i,
X\}=r\{x_i, \pi^2\}=2J_{i,0}$. Next, using the identity
$F_{ij}x_j=0$, we have
\begin{eqnarray}
\{Z_i, Y\}&=&\{x_i\pi^2 - 2 \pi_i(\vec r\cdot \vec \pi)+2r^2F_{ij}\pi_j-{Q\over r^2}x_i , r\}\cr 
&=&x_i\{\pi^2, r\} - 2 \{\pi_i(\vec r\cdot \vec \pi), r\}-2rF_{i j}x_j \cr
&=&-2{x_i\over r}\vec r\cdot\vec \pi - 2\pi_i\{\vec r\cdot \vec \pi, r\} - 2 \{\pi_i, r\}\vec r\cdot \vec \pi \cr
&=&-2{x_i\over r}\vec r\cdot\vec \pi+2 \pi_i r + 2 {x_i\over r}\vec r\cdot \vec \pi \cr
&=&2J_{i, 0}.\nonumber
\end{eqnarray} Consequently, since $\{Z_i, r{1\over r}\}=0$, we get $\{{1\over r}, Z_i\} = {2\over r}\pi_i $. Finally
\begin{eqnarray}
\{r\pi^2, Z_i\} & = &  \{r, Z_i\}\pi^2+r\{\pi^2, Z_i\}\cr 
& = &  -2J_{i,0}\pi^2+r\{\pi^2, Z_i\}\cr
& = &-2J_{i,0}\pi^2+r\{\pi^2,  x_i\pi^2 - 2 \pi_i(\vec r\cdot \vec \pi)+2r^2F_{i j}\pi_j-{Q\over r^2}x_i \}\cr 
&=& -2J_{i,0}\pi^2+ r(\{\pi^2, x_i\}\pi^2 - 2 \{\pi^2, \pi_i(\vec r\cdot \vec \pi)\}\cr
&&+2\{\pi^2, r^2F_{i j}\pi_j\}-Q\{\pi^2, {x_i\over r^2}\} )\cr
&=& -2r\pi_i\pi^2+ r(-2\pi_i\pi^2 - 2 \{\pi^2, \pi_i\}(\vec r\cdot \vec \pi)+4\pi_i\pi^2\cr
&&+2\{\pi^2, r^2F_{i
j}\pi_j\}-Q\{\pi^2, {x_i\over r^2}\})\cr 
&=& r( - 2 \{\pi^2, \pi_i\}(\vec r\cdot \vec \pi)+2\{\pi^2, r^2F_{ij}\}\pi_j\cr
&&+2r^2F_{i j}\{\pi^2, \pi_j\}-Q\{\pi^2, {x_i\over r^2}\} )\cr
&=& r\left( - 2 \{\pi^2, \pi_i\}(\vec r\cdot \vec \pi)-2\{\pi^2, x_i\pi_j-x_j\pi_i\}\pi_j\right)\cr
&&+r\left(2r^2F_{i j}\{\pi^2, \pi_j\}-Q\{\pi^2, {x_i\over r^2}\} \right)\cr 
&=& r\left( 4 \pi_kF_{ki}\, \vec r\cdot \vec \pi-2\{\pi^2, x_i\pi_j\}\pi_j+2\{\pi^2,x_j\pi_i\}\pi_j\right)\cr
&&+r\left(4r^2F_{i j}F_{jk}\pi_k+2Q({\pi_i\over r^2}-{2x_i\over r^4}\vec r\cdot \vec \pi) \right)\cr
&=& r\left(4 \pi_kF_{ki}\, \vec r\cdot \vec \pi-2x_i\{\pi^2,
\pi_j\}\pi_j+2x_j\{\pi^2,\pi_i\}\pi_j\right)\cr
&&+r\left(4r^2F_{i j}F_{jk}\pi_k+2Q({\pi_i\over r^2}-{2x_i\over r^4}\vec r\cdot \vec \pi) \right)\cr
&=& r\left(-4r^2F_{ji}F_{jk}\pi_k+2Q({\pi_i\over r^2}-{2x_i\over r^4}\vec r\cdot \vec \pi) \right)\cr
&=& -{2Q\over r}\pi_i.\quad \mbox{by identity (\ref{Id4})}\nonumber
\end{eqnarray}
Therefore, $\{X, Z_i\}=\{r\pi^2+{Q\over r}, Z_i\} =0$.

\end{proof}

\underline{Case 3}.
\begin{eqnarray}
\{J_{i,-1},J_{j,0}\}= -\eta_{ij}J_{-1, 0},\quad
\{J_{i,2k+2},J_{j,0}\}= -\eta_{ij}J_{2k+2, 0}.
\end{eqnarray} or equivalently
\begin{eqnarray}
\{Z_i,J_{j,0}\}= -\eta_{ij}X,\quad
\{W_i,J_{j,0}\}= -\eta_{ij}Y.
\end{eqnarray}\begin{proof}
Half of the checking is easy:  $\{W_i, J_{j,0}\}=\{x_i, r\pi_j\}=r
\delta_{ij}=-\eta_{ij}Y$. To check the remaining half, we recall from step 2 that $\{Z_i, r\}=2J_{i,0}=2r\pi_i$, then
\begin{eqnarray}
\{Z_i, J_{j, 0}\} & = & r\{Z_i, \pi_j\}+2J_{i, 0}\pi_j\cr
& = & r\{x_i\pi^2 - 2 \pi_i(\vec r\cdot \vec \pi)+2r^2F_{i k}\pi_k-{Q\over r^2}x_i, \pi_j\}+2r\pi_i\pi_j\cr 
& = & r\left(\delta_{ij}\pi^2 -2x_i F_{k j}\pi_k\right)+r(2 F_{ij} \vec r\cdot \vec \pi-2\pi_i\pi_j)\cr
&&+r(-2r^2F_{ik}F_{k j}+\{2r^2F_{i k}, \pi_j\}\pi_k)+Q r\{\pi_j,{x_i\over r^2}\} +2r\pi_i\pi_j\cr
 & = & \delta_{ij}r\pi^2 -2rx_i F_{k j}\pi_k+2 rF_{ij} \vec r\cdot \vec \pi\cr
 &&+r(-2r^2F_{i k}F_{k j}+4x_j F_{ik}\pi_k-2r^2 \{\pi_j, F_{i k}\} \pi_k )+Q r\{\pi_j,{x_i\over r^2}\} \cr
& = & \delta_{ij}r\pi^2 -2r^3F_{ik}F_{kj}+2r(2x_j F_{i k}+x_iF_{j k}+x_k F_{ij}+r^2 \nabla_j F_{i k})\pi_k\cr 
& &+Q r\{\pi_j,{x_i\over r^2}\} \cr 
& = & \delta_{ij}r\pi^2 +2r^3F_{ k i}F_{k j}-Q r\left({\delta_{ij}\over r^2}-2{x_i x_j\over r^4}\right)\quad \mbox{by identity (\ref{Id2})}\cr 
&=& \delta_{ij}(r\pi^2+{Q\over r^2})\quad \mbox{by identity (\ref{Id4})}\cr
&= &-\eta_{ij}X.\nonumber
\end{eqnarray}
\end{proof}

\underline{Case 4}.
\begin{eqnarray}
\{J_{i, -1},J_{j, -1}\} = -J_{ij},\quad
\{J_{i, 2k+2},J_{j, -1}\} = -\eta_{ij}J_{2k+2, 0},\quad
\{J_{i, 2k+2},J_{j, 2k+2}\} = J_{ij}\nonumber
\end{eqnarray}
or equivalently,
\begin{eqnarray}
\{Z_{i},Z_{j}\} =\{W_i, W_j\}= 0,\hskip 10pt
\{Z_{i},W_{j}\}
=-2\left(\eta_{ij}J_{2k+2, -1}+J_{ij}\right).
\end{eqnarray}
\begin{proof} It is clear that $\{W_i, W_j\}= 0$ because
$W_i=x_i$. Next,
\begin{eqnarray}
\{Z_{i},W_{j}\} & = & \{x_i \pi^2 - 2\pi_i \,\vec r\cdot \vec \pi+2r^2F_{i  k}\pi_k-{Q\over
r^2}x_i , x_j\}\cr 
&=&x_i \{\pi^2, x_j\} - 2\{\pi_i\, \vec r\cdot \vec \pi, x_j\}+2r^2F_{i k}\{\pi_k, x_j\}\cr 
&=&-2x_i \pi_j - 2\pi_i \{\vec r\cdot \vec \pi, x_j\}- 2\{\pi_i, x_j\} \vec r\cdot \vec \pi-2r^2F_{i j}\cr 
&=&-2x_i \pi_j +2\pi_i x_j+2\delta_{ij} \vec r\cdot \vec \pi-2r^2F_{ij} \cr 
&=&-2\left(x_i \pi_j-x_j\pi_i+r^2 F_{ij}\right)+2\delta_{ij}\, \vec r\cdot \vec \pi\cr 
& = & -2\left(\eta_{ij}J_{2k+2, -1}+J_{ij}\right).\nonumber
\end{eqnarray}Finally, using results from case 2 and case 3, we have
\begin{eqnarray}
-\{Z_i, Z_j\}& = & \{\{J_{i,0}, X\}, Z_j\}\cr
&= &\{\{J_{i,0}, Z_j\}, X\}+\{J_{i,0}, \{X, Z_j\}\}
\cr &=&\{\eta_{ij}X, X\}+\{J_{i,0}, 0\}= 0.\nonumber
\end{eqnarray}
\end{proof}
\begin{rmk}
The $J_{AB}$s introduced here define a map $J$:  $\mathcal M^\sharp \to \frk{so}^*(2, 2k+2)$. The image of $J$, after adding some missing points, is the union of some co-adjoint orbits of  $\frk{so}(2, 2k+2)$, one for each $\mu\in \mathbb R$, and the orbit with $\mu=0$ lies in the nilpotent cone. 
\end{rmk}
\section{An extension of the classical MICZ Kepler problems}
In view of the work done in Ref. \cite{meng11a}, the Poisson realization of the conformal algebra $\frk{so}(2, 2k+2)$ on a magnetic leave naturally yields a classical generalized Kepler problem associated with the Jordan algebra $\Gamma(2k+1):=\mathbb R\oplus \mathbb R^{2k}$. The goal here is to describe this new classical generalized Kepler problems, as envisaged in Ref. \cite{meng07}. 

Recall from Ref. \cite{meng11a} that the {\bf classical universal Hamiltonian}, {\bf classical universal angular momentum},  and the {\bf classical universal Lenz vector} are
\begin{eqnarray}\label{universalHcla}
\mathcal H={{1\over 2} \mathcal X_e-1\over \mathcal Y_e},\quad {\mathcal L}_{u, v}=\{{\mathcal L}_u, {\mathcal L}_v\}, \quad \mathcal A_u:={1\over 2}\left(\mathcal X_u-\mathcal Y_u{\mathcal X_e\over \mathcal Y_e}\right)+{\mathcal Y_u\over \mathcal Y_e}.\end{eqnarray} respectively \footnote{The hamiltonian $\mathcal H$ here is collective in the sense of Guillemin-Sternberg, i.e., it is a function of the components of the moment map $J$ introduced in section 5.}. Here, $u, e$ are the elements of the Jordan algebra $\Gamma(2k+1)$ with $e$ being the identity element. In our Poisson realization of $\frk{so}(2, 2k+2)$ on magnetic leaves, 
\begin{eqnarray}\left\{
\begin{array}{l}
{\mathcal L}_{e_i}=J_{i,0}=r\pi_i, \cr
{\mathcal Y}_e = Y=r,\cr
{\mathcal X}_e = X =r\pi^2+{Q\over r},\cr
{\mathcal X}_{e_i} = -Z_i=-x_i \pi^2 +2\pi_i (\vec r\cdot  \vec \pi)-2r^2F_{ij}\pi_j+{Q\over r^2}x_i,\cr
{\mathcal Y}_{e_i} = W_i = x_i.
\end{array}\right.
\end{eqnarray} Here $e_1$, \ldots, $e_{2k+1}$ are the standard basis vectors for $\mathbb R^{2k+1}$.
Therefore, we have a generalized classical Kepler problem for which, the hamiltonian is
\begin{eqnarray}\label{hamiltonian}
\fbox{$H={1\over 2}\pi^2+{Q\over 2r^2}-{1\over r}$}, 
\end{eqnarray} the angular momentum $L={1\over 2}\sum_{i,j}L_{ij}e_i\wedge e_j$ with $L_{ij}=L_{e_j, e_i}$ is
\begin{eqnarray}
\fbox{$L=\vec r\wedge \vec \pi+r^2F$}, \quad \mbox{where } F:={1\over 2}\sum_{ij} F_{ij }\, e_i\wedge e_j
\end{eqnarray} and the Lenz vector $\vec A = \sum_i A_ie_i$ with $A_i=A_{e_i}$ is 
\begin{eqnarray}
\fbox{$\vec A= \vec \pi \lrcorner L+{\vec r\over r}$}. 
\end{eqnarray} 
Notice that $$Q={1\over 2k}\sum_{1\le a, b \le 2k}(\gamma_{ab})^2$$ is (up to the scale $1\over k$) the Casimir operator on $\frk{g}:= \frk{so}(2k)$, but is viewed here as a function on $E_\sharp=P\times _G \frk{g}^*$ in the following sense:
for any point $[(p, \xi)]\in E_\sharp$, we have 
$$
Q([(p, \xi)])=\langle \xi, Q\rangle.
$$
By using the basic Poisson bracket relations (\ref{PoissonRel}), with the understanding of $F_{ij}$ in the sense of Eq. (\ref{F-mean}), the Hamilton's equation $f'=\{f, H\}$ for the basic functions $x_i$, $\pi_i$, and $T_\alpha$ becomes
\begin{eqnarray}
\left\{
\begin{array}{l}
x_i' =\pi_i,\cr
\pi_i' = -{x_i\over r^3}+Q {x_i\over r^4}+ \pi_j F_{ji},\cr
T_\alpha' =\pi_k [iA_k,  T_\alpha]=-C_{\alpha\beta}^\gamma \pi_k A_k^\beta T_\gamma.
\end{array}
\right.
\end{eqnarray} 
Therefore, the equation of motion becomes
\begin{eqnarray}\label{EqnM}
\left\{
\begin{array}{l}
\vec r'' = -{\vec r\over r^3}+Q {\vec r\over r^4}+ \vec r'\lrcorner F,\cr
\\
T_\alpha' =  [i\vec r'\lrcorner A,  T_\alpha].
\end{array}
\right.
\end{eqnarray} 
Here, $\lrcorner$ is the interior product. A coordinate-free and gauge-free formulation for the equation of motion, which should exist in the first place, can be obtained from the above equation of motion. To describe it, we let $\vec r$: $\mathbb R\to X:=\mathbb R^{2k+1}_*$ be a smooth map, and $\xi$ be a smooth lifting of $\vec r$:
\begin{eqnarray}
\begin{array}{ccc}
 & & E_\sharp\\
 \\
 & \xi \nearrow &\Big\downarrow\\
 \\
 \mathbb R &\buildrel\vec r\over\longrightarrow & X
\end{array}
\end{eqnarray}
Then the 2nd equation of (\ref{EqnM}) says that $\xi$ is a covariantly constant section of the pullback bundle over $\mathbb R$, i.e.,  
 $$
 {D\over dt}\xi =0.
 $$
Let $Ad_P$ be the adjoint bundle $P\times_G\frk{g}\to {\mathbb R}_*^{2k+1}$, $d_\nabla$ be the canonical connection, i.e., the generalized Dirac monopole on $\mathbb R^{2k+1}_*$. Then the curvature ${\Omega}:=d_\nabla^2$ is a smooth section of the vector bundle $\wedge^2T^*{\mathbb R}_*^{2k+1}\otimes Ad_P$. (With the trivialization of $P\to {\mathbb R}_*^{2k+1}$ chosen in section \ref{monopoles}, locally $\Omega$ can be represented by ${1\over 2}\sqrt{-1}F_{jk}\, dx^j\wedge dx^k$.) 
 Since $Q$ becomes $|\xi|^2\over k$, $Q$ is a constant, i.e., independent of the time. Then the equation of motion (\ref{EqnM}) can be reformulated as equation
\begin{eqnarray}\label{EqnMF}
\fbox{$\left\{
\begin{array}{l}
\vec r'' = -{\vec r\over r^3}+{|\xi|^2\over k} {\vec r\over r^4}+ <\xi, \vec r'\lrcorner \Omega>,\cr
\\
{D\over dt}\xi =0.
\end{array}
\right.$}
\end{eqnarray} Here $<, >$ refers to the pairing of the adjoint bundle and its coadjoint bundle, and 2-forms are identified with 2-vectors via the standard euclidean structure of $\mathbb R^{2k+1}$. 

Eq. (\ref{EqnMF}) can be viewed as a dynamical equation on $E_\sharp$, but then it is not super integrable. By restricting to the magnetic leave with magnetic charge $\mu$, Eq. (\ref{EqnMF}) defines a super integrable model \footnote{Here the word ``super integrable" means that the number of functionally independent conserved quantities is equal to the the dimension of the phase space minus one.} which generalize the classical MICZ Kepler problem. This super integrable model shall be referred to as the {\bf classical Kepler problem with magnetic charge $\mu$ in dimension $2k+1$}. In dimension $5$, it is essentially Iwai's $\mr{SU}(2)$-Kepler problem, cf. Ref. \cite{Iwai90}.

It is not hard to see that $L^2-kQ=|\vec r\wedge \vec r'|^2$, then  $L^2-kQ>0$ for the non-colliding orbits. Note that $kQ=\mu^2$ on $\mathcal O_\mu$. By using the quadratic relations (\ref{QuadraticRel}) one can check that, for the non-colliding orbits of the classical Kepler problem with magnetic charge $\mu$, the total energy 
\begin{eqnarray}
\fbox{$H=-{1-A^2\over 2(L^2-\mu^2)}$}. 
\end{eqnarray} Here, $A^2$ is the length square of the Lenz vector $\vec A$, i.e., $A^2=\sum_i A_i^2$, and $L^2$ is the length square of $L$, i.e., $L^2=\sum_{i<j}L_{ij}^2$. It is expected from Ref. \cite{meng11} (though not from Ref. \cite{Iwai90}) that even in this generalized model, a non-colliding orbit remains an ellipse or a parabola or a branch of hyperbola when the total energy $H$ is negative or zero or positive respectively. The details will be presented elsewhere. 

\vskip 5pt
An interesting direction to explore is to work out the geometric quantization of the models introduced here so that one can reproduce the quantum models introduced in Ref. \cite{meng07}. We expect that the earlier work carried out by I. Mladenov and V. Tsanov \cite{Mladenov} for the Kepler problems in higher dimensions or the MICZ Kepler problems shall serve a good guidance in this exploration.

\end{document}